\newcommand{\simgt}{\lower.5ex\hbox{$\; \buildrel > \over \sim \;$}}
\newcommand{\simlt}{\lower.5ex\hbox{$\; \buildrel < \over \sim \;$}}
\newcommand{\Mod}{\text{\rm mod }}
\newcommand{\QR}{\text{\rm QR}}
\newcommand{\BYTEMIX}{\text{\rm BYTE\_MIX}}
\newtheorem{thm}{Theorem}
\newtheorem{rem}{Remark}
\begin{document}
%
\title{Crashing Modulus Attack on Modular Squaring for Rabin Cryptosystem}

\author{Masahiro~Kaminaga, 
        Hideki~Yoshikawa,
        Arimitsu Shikoda,
        and~Toshinori~Suzuki
\thanks{M. Kaminaga, H. Yoshikawa, and T. Suzuki are with the Department 
of Electrical Engineering and Information Technology, and A. Shikoda is with the Department 
of Electrical Engineering, Tohoku Gakuin University, 
13-1, Chuo-1, Tagajo, 985-8537, Japan e-mail: (kaminaga@mail.tohoku-gakuin.ac.jp).}
\thanks{Manuscript received xxxx xx, xxxx; revised xxxx xx, xxxx.}}

\markboth{IEEE Transactions on Information Forensics and Security,~Vol.~xx, No.~x, xx~20xx}%
{Kaminaga \MakeLowercase{\textit{et al.}}: Crashing Modulus Attack on Modular Squaring}

\maketitle

\begin{abstract} 
The Rabin cryptosystem has been proposed protect the unique ID (UID) in
radio-frequency identification tags. 
The Rabin cryptosystem is a type of lightweight public key system that
is theoretetically quite secure; however it is vulnerable 
to several side-channel attacks. 
In this paper, a crashing modulus attack is presented as a new fault attack on
 modular squaring during Rabin encryption.
This attack requires only one fault in the public key if its perturbed public key can be factored. 
Our simulation results indicate that the attack is more than 50\% successful with several faults in practical time.
A complicated situation arises when reconstrucing the message, including the UID, from ciphertext, i.e., 
the message and the perturbed public key are not relatively prime. 
We present a complete and mathematically rigorous 
message reconstruction algorithm for such a case.
Moreover, we propose an exact formula to obtain a number of candidate messages.
We show that the number is not generally equal to a power of two.
\end{abstract}


%

\section{Introduction}\label{intro}
It is quite important that 
the modular squaring map $f(x) = x^2~\bmod N$ from $\mathbb{Z}_N^*=(\mathbb{Z}/N\mathbb{Z})^*$ to itself, 
where $\mathbb{Z}_N^*=(\mathbb{Z}/N\mathbb{Z})^*$, is the reduced residue system mod $N$.
The algorithmic complexity of 
the modular quadratic equation $x^2\equiv C\pmod{N}$ for quadratic residue $C\bmod N$ is equivalent to factoring $N$. 
Thus, $f$ has one-wayness if factoring $N$ is sufficiently difficult.

There are various applications of the modular squaring map with $N=pq$ for large distinct primes $p$ and $q$.
For example, Blum et al.~\cite{BBS} constructed a unpredictable pseudorandom sequence to take the parity of $x_i$ for 
$x_i = f(x_{i-1})~\Mod N$ with a secret initial value $x_0\in\mathbb{Z}_N^*$. 
This is known as the Blum-Blum-Shub pseudorandom number generator, and it is theoretically quite important. 
Another important example is the Rabin cryptosystem developed by Michael C. Rabin~\cite{Rabin}.
The Rabin cryptosystem is a public key cryptosystem with public key $N$ and secret keys $p$ and $q$. 
In this system, message $M$ is encrypted as $C = M^2~\bmod N$ and decrypted as 
four possible roots of $C$ using $\sqrt{C}~\bmod p$ and $\sqrt{C}~\bmod q$ and the 
Chinese remainder theorem~(CRT)~(see Section \ref{Rabin}). 

The security of the Rabin cryptosystem, similar to that of RSA, 
is related to the practical difficulty of factoring $N=pq$.
The Rabin cryptosystem has a theoretical advantage in that there exists an exact proof of 
its security equivalent to factoring $N$, which is not currently known to be true for RSA.  
Being theoretically important, the Rabin cryptosystem is also useful 
for passive radio frequency identification~(RFID). 
The use of passive RFID tags to prevent counterfeiting by embedding them in a product is an emerging application. 
RFID systems comprises tags and interrogators.
RFID tags are low-cost wireless devices that associate a unique ID~(UID) with the product.
These tags are powered passively by the interrogator.
Implementation of a public key cryptosystem on RFID tags is challenging, 
because the hardware limited.  Therefore, ``lightweight" cryptosystems are required for RFID tags.
RSA is a well-known and effective public key cryptosystem; however, 
it is not suitable for RFID tags.
RSA encryption requires many modular multiplications, relatively long processing time, 
and a relatively large data-path area. 
In addition, RSA consumes a significant amount of energy. 
There are two major lightweight public key cryptosystems suitable for RFID tags, i.e., 
the elliptic curve cryptosystem~(ECC) and the Rabin cryptosystem. 

ECC can be applied to small devices and has shorter processing time than RSA. 
Moreover, ECC is suitable for various RFID applications. 
Many academic papers on ECC for RFID tags have been published.
For example, F\"urbass-Wolkerstorfer~\cite{Furbass}, Lee-Sakiyama-Batina-Verbauwhede~\cite{Lee}, 
Hutter-Feldhofer-Plos~\cite{Hutter1}, Hutter-Feldhofer-Wolkerstorfer~\cite{Hutter2}, 
Hutter-Joye-Sierra~\cite{Hutter3}, Pessl-Hutter~\cite{Hutter4}, and Kern-Feldhofer~\cite{Kern} 
have reported remarkable results on ECC implementation for RFID tag chip.

The Rabin cryptosystem requires only one modular squaring, which 
is advantageous for use with RFID tags. 
Indeed, Oren-Feldhofer~\cite{Oren-Feldhofer}, \cite{Oren-Feldhofer2}, Arbit et al.~\cite{Implement} 
successfully applied Rabin encryption variant to RFID tags. The variant, known as 
WIPR~(Weizmann-IAIK[Institute for Applied Information Processing and Communications] Public Key for RFID) 
was developed by Naccache~\cite{Naccache} and Shamir~\cite{Shamir}, \cite{Shamir2}. 
WIPR is smaller, faster, and requires less power than ECC implementations. 
Giesecke \& Devrient GmbH~\cite{RAMON} proposed 
proposed the Rabin-Montgomery Cryptosystem~(RAMON), 
a public key protocol for RFID tags based on the Rabin cryptosystem. 
RAMON uses Montgomery reduction~\cite{Mont} to avoid trial division. 
Therefore, it is very likely that the Rabin cryptosystem 
will be implemented on  various types of RFID tag chips. 

Since the publication of Boneh, DeMillo, and Lipton's landmark paper~\cite{Bellcore}, 
differential fault analysis(DFA) has been an active area in cryptography. 
DFA is a technique to extract secret information from a cryptographic device by provoking a computational fault. 
DFA is a real threat for cryptographic devices, such as smartcard~\cite{Bar}; therefore, 
a vast number of research papers about DFA have been published. 
The monograph edited by Joye-Tunstall~\cite{Joye-Tunstall-Book} is a good guide to this field. 

However, conventional fault attack research for public key cryptosystems 
has focused on DFA for smartcards, particularly signature schemes using RSA and ECC. 
Little attention has been paid to Rabin cryptosystem implemented on an RFID tag chip. 
We propose a powerful fault attack by one-byte perturbation of public key $N$
 based on the assumption that an attacker can induce faults as 
 the device moves one byte of $N$ from non-volatile memory to a register.
Under this assumption, the attacker can create a new faulted public key $\hat{N}=p_1^{e_1}p_2^{e_2}\cdots p_{\omega}^{e_{\omega}}$, 
where $p_j$ are mutually distinct primes and $e_j$ are positive integers. 

We provide a mathematical analysis and 
demonstrate the effectiveness of the proposed fault attack through simulation.
Although there have been some related studies~\cite{Seifert}, \cite{Muir}, they 
are not directly applicable to our target.
An attack against RSA, first developed by Seifert~\cite{Seifert} and
extended to the general case by Muir et al.~\cite{Muir}, can obtain a new secret
exponent $e^{-1}\bmod \varphi(\hat{N})$, where $e$ is a public exponent and 
$\varphi$ is Euler's totient function. 
This approach is not applicable to Rabin cryptosystems; 
therefore, a secret message $M$ including UID must be reconstructed directly.
Conversely, it is known that an attacker can obtain modular quadratic 
equation $M^2 \equiv C\pmod{\hat{N}}$ by solving each $M^2\equiv C\pmod{p_j^{e_j}}$ and
computing $M$ using CRT only if $\gcd(M,\hat{N})=1$ holds.
However, generally, 
some prime factors of $\hat{N}$ are small. 
As a result, cases in which $\gcd(M,\hat{N})\ne 1$ occur frequently. 
We also present a complete mathematical method to reconstruct the message in such cases. 

\medskip
The remainder of this paper is organized as follows. 
Section II presents a brief description of quadratic residues, 
the Rabin cryptosystem, and basic facts about the target implementation of a Rabin cryptosystem 
in an RFID tag.
Section III presents the general principle and procedure of our attack, 
as well as a complete and mathematically rigorous message reconstruction algorithm.
We show an exact formula to obtain the number of candidate message in Section IV. 
Simulated attack results are presented in Section V, and conclusions are presented in Section VI. 
\section{Preliminaries}
\subsection{Quadratic Residues}
Let $p$ be an odd prime. 
An integer $C\in\mathbb{Z}_p^*$ is called a quadratic residue mod $p$ if 
there exists $x$ such that $x^2\equiv C\pmod{p}$. 
We denote the set of all quadratic residues mod $p$ by $\QR_p$.
We can use 
Euler's criterion to claim that $C^{\frac{p-1}{2}}\equiv 1\pmod{p}$ 
to determine $C\in\QR_p$ or not.
In other words, $C^{\frac{p-1}{2}}\equiv 1\pmod{p}$ holds if $C\in\QR_p$, or 
$C^{\frac{p-1}{2}}\equiv -1\pmod{p}$ holds if $C\not\in\QR_p$.
Using Euler's criterion, square roots of $C$ mod $p$ can be represented as 
$$
\sqrt{C}=\pm C^{\frac{p+1}{4}}\bmod p
$$
if $p\equiv 3\pmod{4}$. 
Note that $\frac{p+1}{4}$ is a positive integer under the condition 
and $(\pm C^{\frac{p+1}{4}})^2\equiv C^{\frac{p+1}{2}}\equiv C^{\frac{p-1}{2}}C\equiv C\pmod{p}$.
If $p\not\equiv 3\pmod{4}$, i.e., $p\equiv 1\pmod{4}$, 
then we require the Tonelli-Shanks algorithm to find the square roots. 
The Tonelli-Shanks algorithm runs in polynomial time 
assuming that the generalized Riemann hypothesis is true. 
We can find the roots mod $p^e$ of $C$ as follows. 
First, we find all square roots mod $p$. 
Then using the Hensel lift(see e.g., section 13.3.2 of Shoup's book~\cite{Shoup}), we 
lift each of these square roots to obtain all of the roots $p^2$, 
and then lift these to obtain all square roots 
mod $p^3$, and so on.
Quadratic residue can be generalized for the mod of 
an odd composite number $N=p_1^{e_1}p_2^{e_2}\cdots p_{\omega}^{e_{\omega}}$. 
A quadratic residue mod $N$ is an integer $\beta$ such that there exists an $\alpha\in\mathbb{Z}_N^*$ 
that satisfies $\beta\equiv \alpha^2\pmod{N}$. 
Using the CRT, it can be demonstrated that $\beta$ is a quadratic residue mod $N$ 
if and only if it is a quadratic residue mod of each $p_i^{e_i}$, and one obtain every root mod $N$ 
from all roots mod $p_i^{e_i}$'s.

\subsection{Rabin Cryptosystem}\label{Rabin}
The Rabin cryptosystem~\cite{Rabin} is a public key system based on the factorization 
difficulty of $N=pq$ where $p$ and $q$ are large and distinct balanced primes. 
The length $n$ of $N$ must be greater than or equal to 1,024 to be safe.
$N$ is its public key, and $p$ and $q$ are its secret keys. 
To reduce decryption complexity, choose $p$ and $q$ that satisfy 
$p\equiv q\equiv 3\pmod{4}$ should be chosen. 
According to the Dirichlet's theorem on arithmetic progressions
(see e.g., Theorem 5.52 of Shoup's book~\cite{Shoup}), 
infinitely many prime numbers $p$ that satisfy $p\equiv 3\pmod{4}$ exist. 
Let $\mathbb{Z}_N^*=(\mathbb{Z}/N\mathbb{Z})^*$ be the reduced residue system mod $N$.
Generally, the plaintext $M\in\mathbb{Z}_N^*$ is generated from a shorter message, including 
the UID, in our case. 
In the Rabin cryptosystem, to encrypt $M\in\mathbb{Z}_N^*$, the sender computes its square mod $N$:
$$
C = M^2 \bmod N.
$$ 

To decrypt the ciphertext $C$, the receiver computes its square roots $\sqrt{C}$ 
of $C$ mod $N$ using $p$ and $q$ as follows.
First, compute 
$C_p = \sqrt{C}\bmod p = C^{\frac{p+1}{4}}\bmod p$ and 
$C_q = \sqrt{C}\bmod q = C^{\frac{q+1}{4}}\bmod q$
using an efficient exponentiation algorithm.
Second, using the CRT, 
the four roots are computed as follows:
$$
\sqrt{C}=
\pm C_p q(q^{-1}\bmod p)\pm C_q p(p^{-1}\bmod q).
$$
Finally, 
the receiver recognizes the valid plaintext based on its format, such as redundancy and structure.

The Rabin cryptosystem has two significant advantages with respect to 
alternative public key schemes. 
First, it is provably difficult to factor $N$. 
Second, it imposes a small computational burden, 
has relatively 
lightweight implementation, and requires only a single squaring and modular reduction 
for encryption.

lightweight implementation, and requires

%
%

\subsection{WIPR Scheme}
The most time consuming process of Rabin encryption is trial division by $N$ 
because it is a RAM-intensive process. 
There are two well-known ways to avoid trial division. 
One is using Montgomery reduction~\cite{Mont}, 
and the other is using the WIPR scheme.
When we use Montgomery reduction, we 
compute $A^2R^{-1} \bmod N(R=2^n)$ rather than $A^2 \bmod N$, 
where $N$ is $n$-bits long; therefore, it is simply a data format problem. 
Conversely, the WIPR scheme includes an essentially different process.
Thus, we describe only the WIPR scheme.

%
To reduce the trial division process, Naccache~\cite{Naccache} and Shamir~\cite{Shamir}, \cite{Shamir2} 
proposed a variant by replacing the modular multiplication by adding a large random multiple of $N$, 
where the size of the random number $r$ is at least 80 bits longer than the size of $N$:
$$
C' = M^2 + rN.
$$
Obviously, $C'\equiv C\pmod{N}$, and $C'$ is fully randomized.
The decryption process is identical to Rabin's original process.
This randomized variant of Rabin's scheme is easier to implement because it has only multiplications 
without modular reduction. 
It is lighter than the original Rabin scheme; however, it 
requires a register that is approximately twice as long 
for the ciphertext.

The WIPR scheme replaces $r$ with the output of a light stream cipher, 
which was developed by Oren-Feldhofer~\cite{Oren-Feldhofer}. 
This stream cipher is implemented by creating a Feistel network. 
Arbit et al.~\cite{Implement} reported that 
their successful implementation had a data-path area of 
4,184 gate equivalents, an encryption time of 180 ms and an average power consumption of 11 $\mu$W.  

We describe the WIPR challenge-response protocol as follows.
\begin{enumerate}
\item Challenge: The interrogator sends the challenge(random bit string) $c$ 
of length $s$ to the tag.
\item Response: 
The RFID tag generates two random bit strings $R_{\mbox{\small tag}}$ and $r$, where 
$|R_{\mbox{\small tag}}| = n - s - |\mbox{UID}|$, and $|r|=n + t$.
The tag generates a message as follows:
$$
M = \BYTEMIX(c || R_{\mbox{\small tag}} || \mbox{UID}),
$$
where $||$ denotes concatenation operator, and transmits the following ciphertext:
$$
C' = M^2 + rN, 
$$
and {\BYTEMIX} is a simple byte-interleaving 
operation~(see Oren-Feldhofer~\cite{Oren-Feldhofer2} for details).
\item Verify: The interrogator decrypts $C'$ using the secret key $(p,q)$ and 
finds the correct message, including the UID, in four square roots.
\end{enumerate}
Here, $s$ and $t$ are security parameters~(originally set to $s=t=80$).

Note that using the Rabin function $f(x)=x^2\bmod N$ to encrypt a message $M$ 
that satisfies $|M|<n$ requires 
some kind of random padding. 
Some padding schemes with short random padding are vulnerable to attacks based on 
Coppersmith's Theorem for a univariate polynomial~\cite{Coppersmith} and 
Franklin-Reiter's related message attack~\cite{FR1}, \cite{FR2}.

\subsection{RAMON}
Another way to avoid trial division is using 
Montgomery reduction~\cite{Mont}.
When we use Montgomery reduction, we 
compute $ABR^{-1} \bmod N(R=2^n)$ rather than $AB \bmod N$, 
where $N$ is $n$-bits long.
Montgomery reduction computes $S$ that satisfies the following Diophantine equation:
$$
AB + TN = SR.
$$
Clearly, $S\equiv ABR^{-1} \pmod{N}$. 

RAMON was proposed by Giesecke \& Devrient GmbH~\cite{RAMON}.
RFID tag sends the following ciphertext:
$$
C^* = M^2R^{-1}\bmod N,
$$
where $R=2^n$.
In 1024 bit $N$ case, the message $M$ is formatted as follows:
$M$(128 bytes) = challenge(10 bytes) $||$ tag random number(10 bytes)
 $||$ TLV-coded signed Tag UID(n bytes) $||$ variable length filling(x bytes) $||$ 
checksum(2 bytes) $||$ The last byte must be left free; i.e., set to zero~(1 byte).

Note that $C^*\ne C$; therefore, 
the interrogator transforms $C^*$ into normal ciphertext as follows:
$$
C = C^*R\bmod N = (M^2R^{-1})R\bmod N.
$$
Then, the interrogator computes four roots of $C$ mod $N$ using the secret key $(p,q)$ 
and finds the correct $M$ based on its format.

%
%

\section{Proposed Attack}

\subsection{Principle}
In the following, we consider only the WIPR protocol for convenience. 
The fundamental idea of our attack method uses perturbed public key $\hat{N}$ of $N$.
In this case, the ciphertext changes as follows:
\begin{equation}\label{perturbed_eq}
\hat{C} = M^2 + r\hat{N}.
\end{equation}
Generally, the attacker cannot factor the coprime $N$ in realistic time.
Conversely, a perturbation $\hat{N}$ of $N$ can be factored at high probability. 
If the attacker has factored $\hat{N}$ successfully, such as 
$\hat{N}=p_1^{e_1}p_2^{e_2}\cdots p_{\omega}^{e_{\omega}}$, 
then the modular quadratic equation 
$\hat{C} = M^2 \pmod{\hat{N}}$ derived from (\ref{perturbed_eq}) splits into $\omega$ smaller 
equations: 
\begin{equation}\label{perturbed_eq_splitted}
 M^2\equiv \hat{C} \pmod{p_j^{e_j}}, \quad j=1,2,\cdots, \omega.
\end{equation}
Equation (\ref{perturbed_eq_splitted}) can be solved using the 
Tonelli-Shanks algorithm and the Hensel lift. 
CRT leads us to all roots of (\ref{perturbed_eq}) 
from the roots $M_{jk}(k\geq 2)$ of (\ref{perturbed_eq_splitted}), i.e., 
we obtain:
\begin{equation}\label{CRT}
M_0 = \sum_{j=1}^{\omega}M_{jk}\hat{N}_j(\hat{N}_j^{-1}\bmod p_j^{e_j}),
\end{equation}
where $\hat{N}_j=\hat{N}/p_j^{e_j}$. 
The attacker obtains the correct roots $M$ by modifying the above $M_0$ as $(M_0+k\hat{N})\bmod{N}$ for 
the smallest $k$ such that $M_0+k\hat{N}>N$. 
The perturbed modular quadratic equation (\ref{perturbed_eq}) typically has $2^{\omega}$ roots, and 
these roots contain the correct message including the UID.
The number of roots exceeds $2^{\omega}$ in some cases depending 
on the values of $\gcd(\hat{C},p_j^{e_j})$ for $j=1,2,\cdots,\omega$. 
We discuss these problems in Sections \ref{roots} and \ref{NUM}.

\subsection{Fault Models}
The WIPR protocol requires two online multiplications to compute $C = M^2 + rN$. 
Optimal implementation of WIPR with 1,024 bit $N$ was shown by 
Arbit et al.~\cite{Implement}.
This multiplication process is performed on a multiply-accumulate register by convolution. 
Assuming a word size of one byte, a single multiply-accumulate register 
perform this multiplication in approximately $2^{16}$ steps. 
The public key $N$ moves from non-volatile memory to the register byte by byte. 
We assume that the attacker can inject a one-byte fault into this data moving process.
In this paper, we consider two fault models. 
\subsubsection{Crash a byte of $N$}
The first fault model that we choose to perform our attack 
with is derived from those used by Berzati et al.~\cite{Berzati2}, \cite{Berzati3} 
to successfully attack standard RSA.

Here let $\mathbb{Z}[a,b]$ be a set of integers in the interval $[a,b]$. 
We assume that the attacker can inject a transient fault that public key $N$ modifies by byte, that is,
the injected fault affects only one byte of the public key by modifying it randomly as follows:
$$
\hat{N} = N \oplus \epsilon
$$
where $\oplus$ is bitwise exclusive OR and $\epsilon = R_i\cdot 2^{8i}$, $R_i\in\mathbb{Z}[1,2^8-1]$ 
for $i\ne 0$ which 
is required to preserve the parity of $\hat{N}$. 
We assume the attacker knows the position $i$, 
but the correct value of the faulty public key $\hat{N}$ is unknown by the attacker.
The attacker must factor 255~($=2^8-1$) candidates of $\hat{N}$. 
Our attack also works for a fault that affects several bytes of $N$. 
However, the attacker's task grows in proportion to 
the number of candidates $\hat{N}$ of perturbed $N$.

This is a natural assumption for both WIPR and RAMON. 
In the WIPR case, the attack target is the time at 
which the $i$-th byte $N[i]$ of $N$ moves from non-volatile memory to 
the register for multiplication before multiplying $r$ and $N$. 
In the RAMON case, the fault is injected while $N$ moves from 
non-volatile memory, such as EEPROM, 
to a register at the trensfer time of the $i$-th byte 
prior to Montgomery squaring $M^2R^{-1}\bmod N$.

\subsubsection{Instruction skip}
The second fault model is based on the instruction skip technique. 
Instruction skip is equivalent to replacing an instruction with a no operation 
in assembly language.
Several researchers have investigated DFA using an instruction skip, or a bypass operation 
\cite{RoundReduction}, \cite{park}, \cite{yoshikawa1}. 
Instruction skip does not affect the registers, internal memory, and calculation process.
Successful instruction skip attacks have been reported for 
PIC16F877~\cite{RoundReduction}, ATmega 128~\cite{park}, 
and ATmega 168~\cite{yoshikawa1} microcontrollers. 
Choukri-Tunstall~\cite{RoundReduction} and Park et al.~\cite{park} showed that 
an entire Advanced Encryption Standard secret key could be 
reconstructed by skipping a branch instruction used to 
check the number of rounds. 
Kaminaga et al.~\cite{KYS2015IEEE} showed that it is possible to 
reconstruct an entire secret exponent with $63(=2^6-1)$ 
faulted signatures in a short time for a 1536-bit RSA implementation with the 
$2^6$-ary method using instruction skipping technique in precomputation phase.

%
%

Our attack target is a conditional branch operation for moving 
the last byte of $N$ at the counter $i=127$.
If the conditional branch operation is 
skipped, the attacker obtains the faulted public key $\hat{N}$ as follows:
$$
\hat{N} = \sum_{i=0}^{126}N[i](2^8)^i,
$$
where each $N[i]\in \mathbb{Z}[0,255]$.
Clearly, $\hat{N}$ is one byte shorter than the original $N$, and preserves its parity.

\subsection{Target Byte Location}
Some Rabin cryptosystems for RFID tags adopt special types of modulus $N$ for restricted hardware resources.
WIPR for RFID proposed by Oren-Feldhofer~\cite{Oren-Feldhofer2} uses 
modulus $N$ with a predefined upper half to reduce ROM cost by half.
RAMON uses modulus $N$ to satisfy the condition $N\equiv 1\pmod{2^{n/2}}$, which means 
that approximately one-half of the least significant bits of $N$~(except for the last one) are zeroes 
to reduce multiplications~\cite{RAMON}.
Half of the processes in which modulus data is transferring from EEPROM to a register are reduced in 
such cases.
Therefore, the attacker must set the location of the target byte to the lower half bytes of $N$ 
for implementation of 
Oren-Feldhofer's proposal~\cite{Oren-Feldhofer2}, and the attack must set the 
location of the byte to upper half bytes of $N$ for RAMON~\cite{RAMON}.

\subsection{Attack Procedure}\label{AP}
The attacker's goal is to reconstruct the secret message $M$, including the UID.
The following steps provide an example of our attack process. 
\begin{description}
\item[Step 1.] ~Create a perturbed public key $\hat{N}$ 
by injecting a fault to a byte of the public key $N$.
\item[Step 2.] ~Factorize 255 candidates $\hat{N}_k(k=1,2,\cdots, 255)$ of $\hat{N}$. 
When factoring consumes too much amount of time, perturb another byte of $N$ and attempt this process again.
\item[Step 3.] ~Solve modular quadratic equation $x^2\equiv \hat{C}\pmod{\hat{N}_k}$.
\item[Step 4.] ~Find the correct message $M$ based on data format in all roots of $x^2\equiv \hat{C}\pmod{\hat{N}_k}$.
\end{description}
Step 2 is the most time concuming process in computation for the attacker. 
Most of $\hat{N}_k(k=1,2,\cdots, 255)$ have relatively small factors; thus, these 
can be factored in a short time.
However, some cases require more time to factor $\hat{N}_k$. Then, the attacker 
shifts the position of the target byte and attempt the factoring process again. 
Step 3 is a technical process, and we must consider the degenerate case, 
$\gcd(\hat{C},\hat{N}_k)\ne 1$.
Mathematicians have paid little attention to the degenerate case; however, 
such case arises in our attack. 

\subsection{Reconstruction of Roots}\label{roots}
Our attack method comes down to finding all roots of the following 
modular quadratic equation with a square number $\hat{C}$:
\begin{equation}\label{QE}
x^2 \equiv \hat{C}\pmod{p^e},
\end{equation}
where $p$ is an odd prime and $e$ is a positive integer. 
The oddness of $p$ obeys the fact that our attack targets a byte of $N$ that is not the lowest byte.

We use assumption (A), i.e., $\hat{C}$ is squared $\bmod{~p^e}$, throughout this paper. 
Assumption (A) is very natural because $\hat{C}\equiv M^2\pmod{p_j^{e_j}}$ is derived from 
our target equation $\hat{C} = M^2 + r\hat{N}$. 

We denote $\hat{C}\bmod\hat{N}$ by $\hat{C}$. 
The algorithm for finding the roots of (\ref{QE}) depends on whether $\gcd(p,\hat{C})=1$. 
We distinguish the ``degenerate" case $\gcd(\hat{C},\hat{N})\ne 1$ 
from the ``non-degenerate" case $\gcd(\hat{C},\hat{N})= 1$.

\subsubsection{Non-degenerate Case}

Here, let $\hat{C}\equiv M^2\pmod{\hat{N}}$ for some $M\in\mathbb{Z}$.
For $\gcd(p,\hat{C})=1$, finding the roots of (\ref{QE}) 
is not difficult~(see e.g., Section 2.8.2 of Shoup's book~\cite{Shoup_book}). 
\begin{thm}(Theorem 2.25~\cite{Shoup_book})\label{justtwo}
If $\gcd(p,\hat{C})=1$, then the modular quadratic equation (\ref{QE}) 
is equivalent to $x\equiv \pm M\pmod{p^e}$. 
In particular, (\ref{QE}) has only two roots.
\end{thm}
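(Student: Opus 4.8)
The plan is to prove both directions of the equivalence and then read off the two-root claim. First I would show that $x \equiv \pm M \pmod{p^e}$ implies $x^2 \equiv \hat{C} \pmod{p^e}$: this direction is immediate, since squaring gives $x^2 \equiv M^2 \equiv \hat{C} \pmod{p^e}$ by the defining hypothesis $\hat{C} \equiv M^2 \pmod{p^e}$. For the converse, suppose $x^2 \equiv \hat{C} \equiv M^2 \pmod{p^e}$. Then $p^e \mid (x-M)(x+M)$. The key observation is that $\gcd(p, \hat{C}) = 1$ forces $\gcd(p, M) = 1$ (otherwise $p \mid M^2 \equiv \hat{C}$), so $p \nmid (x-M) + (x+M) = 2x$ would follow if $p \mid x$, but more to the point: since $p$ is odd and $p \nmid M$, the two factors $x - M$ and $x + M$ cannot both be divisible by $p$ — their difference is $2M$, which is coprime to $p$. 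Hence $p$ divides at most one of them, so by unique factorization the full power $p^e$ must divide that one factor, giving $x \equiv M \pmod{p^e}$ or $x \equiv -M \pmod{p^e}$.

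Next I would confirm the count. The two residues $M$ and $-M$ modulo $p^e$ are distinct: if $M \equiv -M \pmod{p^e}$ then $p^e \mid 2M$, and since $p$ is odd this gives $p^e \mid M$, contradicting $\gcd(p, M) = 1$ (for $e \geq 1$, $p \mid M$). Therefore equation (\ref{QE}) has exactly the two distinct solutions $x \equiv \pm M \pmod{p^e}$.

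The only mild subtlety — and the step I would be most careful about — is the coprimality bookkeeping: one must track that $\gcd(p, \hat{C}) = 1$ propagates to $\gcd(p, M) = 1$ and that this, together with $p$ being odd, is exactly what rules out the "both factors share a factor of $p$" pathology that occurs for $p = 2$ or when $p \mid \hat{C}$ (the degenerate case, handled separately). Everything else is a routine application of unique factorization in $\mathbb{Z}$. One could alternatively phrase the converse via the structure of $\mathbb{Z}_{p^e}^*$ as a cyclic group of order $p^{e-1}(p-1)$, in which the squaring map is two-to-one on squares, but the elementary divisibility argument is shorter and self-contained, so that is the route I would present.
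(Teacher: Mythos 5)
Your proof is correct and follows essentially the same route as the paper: the paper first multiplies by $M^{-1}$ to reduce (\ref{QE}) to $(\gamma+1)(\gamma-1)\equiv 0\pmod{p^e}$ and then uses that $p$ is odd to rule out $p$ dividing both factors, while you factor $(x-M)(x+M)$ directly and note their difference $2M$ is coprime to $p$ — the same key observation. Your explicit check that $M\not\equiv -M\pmod{p^e}$ is a small addition the paper leaves implicit, but the argument is otherwise identical in substance.
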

\begin{proof}
Here, we show the proof for Theorem~\ref{justtwo}. 
Since $\gcd(M,p)=1$, there exists the inverse $M^{-1}$ mod $p$. 
Therefore, (\ref{QE}) is equivalent to $(xM^{-1})^2 \equiv 1\pmod{p^e}$.
Let $\gamma$ be $xM^{-1}$, and we obtain
$$
(\gamma+1)(\gamma-1)\equiv 0\pmod{p^e}.
$$
Thus, there exists non-negative integers $\delta_1, \delta_2(\delta_1+\delta_2=e)$ such that 
$p^{\delta_1}|(\gamma+1)$ or $p^{\delta_2}|(\gamma-1)$. 
Then, $2=(\gamma+1)-(\gamma-1)$ divides $p$ if both $\delta_1$ and $\delta_2$ are positive, 
which leads to a contradiction because $p$ is an odd prime, therefore, 
$\delta_1=0$ or $\delta_2=0$.
\end{proof}

\begin{thm}
For $p\equiv 3\pmod{4}$ and $\gcd(p,\hat{C})=1$, all the roots of (\ref{QE}) are given by:
\begin{equation}\label{easyroot}
\sqrt{\hat{C}} \equiv \pm \hat{C}^{\frac{\varphi(p^e)+2}{4}}\pmod{p^e},
\end{equation}
where $\varphi$ is Euler's totient function.
\end{thm}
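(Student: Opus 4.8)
The plan is to verify directly that the right-hand side of (\ref{easyroot}) squares to $\hat{C}$ modulo $p^e$, and then invoke Theorem~\ref{justtwo} to conclude that these are \emph{all} the roots. First I would observe that, since $p \equiv 3 \pmod 4$, we have $p - 1 \equiv 2 \pmod 4$, hence $\varphi(p^e) = p^{e-1}(p-1)$ is $\equiv 2 \pmod 4$ as well (because $p^{e-1}$ is odd), so $\varphi(p^e) + 2 \equiv 0 \pmod 4$ and the exponent $\frac{\varphi(p^e)+2}{4}$ is a genuine positive integer. This is the analogue of the remark in Section~\ref{Rabin} that $\frac{p+1}{4}$ is an integer, and it is what makes the closed-form expression well-defined.

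Next I would compute the square. We have
$$
\Bigpare{\pm \hat{C}^{\frac{\varphi(p^e)+2}{4}}}^2 \equiv \hat{C}^{\frac{\varphi(p^e)+2}{2}} \equiv \hat{C}^{\frac{\varphi(p^e)}{2}} \cdot \hat{C} \pmod{p^e}.
$$
Since $\gcd(\hat{C},p) = 1$, Euler's theorem gives $\hat{C}^{\varphi(p^e)} \equiv 1 \pmod{p^e}$, so $\hat{C}^{\varphi(p^e)/2}$ is a square root of $1$ modulo $p^e$. By assumption (A), $\hat{C}$ is a quadratic residue mod $p^e$, say $\hat{C} \equiv M^2$ with $\gcd(M,p)=1$; then $\hat{C}^{\varphi(p^e)/2} \equiv M^{\varphi(p^e)} \equiv 1 \pmod{p^e}$. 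Substituting back yields $\bigpare{\pm \hat{C}^{(\varphi(p^e)+2)/4}}^2 \equiv \hat{C} \pmod{p^e}$, so both signs give roots of (\ref{QE}).

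Finally, by Theorem~\ref{justtwo}, equation (\ref{QE}) has exactly two roots when $\gcd(p,\hat{C})=1$, and they are $\pm M$; since we have exhibited two roots $\pm \hat{C}^{(\varphi(p^e)+2)/4}$ and these are negatives of one another, they must coincide with $\{M, -M\}$ as a set, which proves that (\ref{easyroot}) lists all roots. The only mild subtlety — really the one place to be careful — is the step $\hat{C}^{\varphi(p^e)/2} \equiv 1$: one must use assumption (A) (that $\hat{C}$ is a square mod $p^e$) rather than merely $\gcd(\hat{C},p)=1$, since a priori $\hat{C}^{\varphi(p^e)/2}$ could equal $-1$ for a non-residue. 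Everything else is a routine application of Euler's theorem and the already-proved Theorem~\ref{justtwo}.
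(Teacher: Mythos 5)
Your proposal is correct and follows essentially the same route as the paper: check that $\frac{\varphi(p^e)+2}{4}$ is an integer from $p\equiv 3\pmod 4$, use assumption (A) (that $\hat{C}$ is a square) together with Euler's theorem to get $\hat{C}^{\varphi(p^e)/2}\equiv 1\pmod{p^e}$, verify the square directly, and invoke Theorem~\ref{justtwo} to conclude the two exhibited roots are all of them. You even make explicit the one subtle point the paper's wording (``squaring $\hat{C}$'') leaves terse, namely that $\hat{C}^{\varphi(p^e)/2}\equiv 1$ requires $\hat{C}$ to be a residue and not merely coprime to $p$.
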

\begin{proof}
It is easy to verify (\ref{easyroot}) directly. 
From Theorem~\ref{justtwo}, the number of roots of (\ref{QE}) is two. 
Note that $\varphi(p^e)+2=p^{e-1}(p-1)+2\equiv 2(-1)^{e-1}+2\equiv 0\pmod{4}$ follows from 
$p\equiv 3\pmod{4}$.
By using Euler's totient theorem $\hat{C}^{\varphi(p^e)}\equiv 1\pmod{p^e}$ 
and squaring $\hat{C}$, we obtain:
\begin{equation}\label{sqrt_Eu}
\hat{C}^{\varphi(p^e)/2}\equiv 1\pmod{p^e}.
\end{equation}
Using (\ref{sqrt_Eu}), we obtain
$(\hat{C}^{\frac{\varphi(p^e)+2}{4}})^2 \equiv \hat{C}^{\frac{\varphi(p^e)+2}{2}}\pmod{p^e} \equiv \hat{C}\pmod{p^e}$.
This means that $\hat{C}^{\frac{\varphi(p^e)+2}{4}}$ is a square root mod $p^e$ of $\hat{C}$, 
therefore, all square roots mod $p^e$ of $C$ are given by $\pm \hat{C}^{\frac{\varphi(p^e)+2}{4}}$.
\end{proof}

For $p\equiv 1\pmod{4}$, we first solve the following equation:
\begin{equation}\label{diffcase}
x^2 \equiv \hat{C}\pmod{p}.
\end{equation}

We require the Tonneli-Shanks algorithm to solve (\ref{diffcase}). 
The Tonneli-Shanks algorithm can be described as follows.

{\small
\begin{enumerate}
\item Determine $s$ such that $p-1 = 2^st$ where $t$ is odd.
\item Find a non-quadratic residue $v$ mod $p$.
\item Compute $z = \hat{C}^t \bmod p$.
\item Find $u$ such that $(v^t)^u \bmod p = z$.
\item Compute $k = 2^s - u$.
\item Output $\hat{C}^{(t+1)/2}(v^t)^{k/2} \bmod p$ as a solution.
\end{enumerate}
}
It is easy to check if $\hat{C}^{(t+1)/2}(v^t)^{k/2} \bmod p$ is a square root mod $p$ of $\hat{C}$. 
Indeed, using Fermat's little theorem, we obtain 
$(\hat{C}^{(t+1)/2}(v^t)^{k/2})^2 \equiv v^{tu}\hat{C}v^{p-1-tu} \equiv \hat{C}\pmod{p}$.

After solving (\ref{diffcase}), the roots can be lifted to mod $p^u(u>1)$ using the following.

\begin{thm}(Hensel's lifting lemma) 
Let $f(x)\in\mathbb{Z}_p[x]$ and $x_0\in\mathbb{Z}_p$ satisfy 
$f(x_0)\equiv 0\pmod{p^k}$ and $f'(x_0)\not\equiv 0\pmod{p}$.
Then, there is unique $x\in\mathbb{Z}_p$ such that $f(x)=0\pmod{p^{k+m}}$ 
and $x\equiv x_0\pmod{p^k}$. 
Furthermore, this x is unique in mod $p^{k+m}$, and can be represented explicitly as:
$$
x = x_0 + tp^k
$$
where
$$
t = -\frac{f(x_0)}{p^k}\cdot (f'(x_0)^{-1}).
$$
The division by $p^k$ denotes ordinary integer division, 
and the inversion $f'(x_0)^{-1}$ is computed in $\bmod{~p^m}$.
\end{thm}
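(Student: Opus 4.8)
The plan is to carry out the standard single-step Hensel argument, whose only substantive ingredient is the ``Taylor expansion with integer coefficients'' of a polynomial over $\mathbb{Z}$. Concretely, for any $f\in\mathbb{Z}[X]$ one has the polynomial identity
\[
f(X+Y) = f(X) + f'(X)\,Y + Y^{2}\,g(X,Y)
\]
for some $g\in\mathbb{Z}[X,Y]$: this is verified on monomials $X^{n}$, where $(X+Y)^{n}=\sum_{j\ge 0}\binom{n}{j}X^{n-j}Y^{j}$ has integral binomial coefficients, and then extended by $\mathbb{Z}$-linearity. Specializing $X=x_{0}$ and $Y=h$ for integers $x_{0},h$ shows that $h^{2}$ divides $f(x_{0}+h)-f(x_{0})-f'(x_{0})h$. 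This is the engine; everything else is bookkeeping with powers of $p$.

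\textbf{Existence.} Put $x=x_{0}+tp^{k}$ with $t=-\frac{f(x_{0})}{p^{k}}\bigl(f'(x_{0})^{-1}\bmod p^{m}\bigr)$; the inverse exists because $p\nmid f'(x_{0})$ gives $\gcd(f'(x_{0}),p^{m})=1$, and $\frac{f(x_{0})}{p^{k}}\in\mathbb{Z}$ because $p^{k}\mid f(x_{0})$. Taking $h=tp^{k}$ in the expansion above, the quadratic remainder is divisible by $h^{2}$, hence by $p^{2k}$; assuming $m\le k$ one has $p^{k+m}\mid p^{2k}$, so
\[
f(x)\;\equiv\;f(x_{0})+f'(x_{0})\,t\,p^{k}\pmod{p^{k+m}}.
\]
Dividing the right-hand side by $p^{k}$ turns the desired congruence into $\frac{f(x_{0})}{p^{k}}+f'(x_{0})t\equiv 0\pmod{p^{m}}$, which is exactly the defining relation for $t$; hence $f(x)\equiv 0\pmod{p^{k+m}}$, and clearly $x\equiv x_{0}\pmod{p^{k}}$.

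\textbf{Uniqueness modulo $p^{k+m}$.} If $x'$ is any integer with $x'\equiv x_{0}\pmod{p^{k}}$ and $f(x')\equiv 0\pmod{p^{k+m}}$, write $x'=x_{0}+sp^{k}$ with $s\in\mathbb{Z}$ and repeat the computation with $h=sp^{k}$: the same reasoning gives $\frac{f(x_{0})}{p^{k}}+f'(x_{0})s\equiv 0\pmod{p^{m}}$, and multiplying by $f'(x_{0})^{-1}\bmod p^{m}$ forces $s\equiv t\pmod{p^{m}}$, i.e.\ $x'\equiv x_{0}+tp^{k}=x\pmod{p^{k+m}}$.

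The main obstacle is not depth but precision: one must be sure the quadratic remainder is divisible by $p^{2k}$ \emph{as an integer}, which is precisely why the coefficients must be taken integral (the $\binom{n}{j}$, equivalently the Hasse derivatives $f^{(j)}(x_{0})/j!$) and not merely rational. Two points I would state explicitly. First, the hypothesis $p\nmid f'(x_{0})$ is used only to invert $f'(x_{0})$ modulo $p^{m}$. Second, the displayed closed form for $t$, and with it uniqueness modulo $p^{k+m}$, is valid exactly in the range $m\le k$ (only then does $p^{2k}$ absorb the remainder); for larger $m$ one iterates the lemma, raising the modulus from $p^{k}$ to $p^{k+1}$ to $p^{k+2}$ and so on, which is just how it is applied to $f(x)=x^{2}-\hat C$ in Section~\ref{roots}, where at each step $k\ge 1=m$ and $f'(x)=2x$ is invertible because $p$ is odd and $\gcd(x_{0},p)=1$.
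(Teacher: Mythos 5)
The paper states this lemma without proof --- it is invoked as a standard result (cf.\ Shoup's book) and immediately applied --- so there is no in-paper argument to compare against; judged on its own, your proof is the standard one and it is correct. The integral Taylor identity $f(X+Y)=f(X)+f'(X)\,Y+Y^{2}g(X,Y)$ with $g\in\mathbb{Z}[X,Y]$, the substitution $h=tp^{k}$, and the reduction of both existence and uniqueness to the single linear congruence $\frac{f(x_{0})}{p^{k}}+f'(x_{0})t\equiv 0\pmod{p^{m}}$ are all carried out correctly, and the argument transfers verbatim when the coefficient ring is $\mathbb{Z}_p$ (the $p$-adic integers, as the statement is phrased), since the binomial coefficients remain integral there. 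You are also right to flag the hypothesis the statement leaves implicit: the closed form for $t$ produces a root modulo $p^{k+m}$ only when $m\le k$, since only then does $p^{2k}$ absorb the quadratic remainder; this matches how the lemma is actually used in Section~\ref{roots}, where the lifting proceeds one power at a time ($m=1\le k$) and $f'(x)=2x$ is a unit because $p$ is odd and $\gcd(x_{0},p)=1$. One small overstatement in your closing remark: existence and uniqueness of the lift modulo $p^{k+m}$ remain true for $m>k$ as well, by iterating the $m\le k$ case as you yourself indicate, so it is only the single-step explicit formula --- not the uniqueness assertion --- that genuinely requires $m\le k$.
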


\subsubsection{Degenerate Case}
Here we consider the degenerate case, i.e., $\gcd(\hat{C},\hat{N})\ne 1$.

\begin{thm}\label{zerothm}
All roots of the modular quadratic equation $x^2\equiv 0\pmod{p^e}$ are given by
$x=kp^{\lceil e/2\rceil}$ for $k\in\mathbb{Z}[0,p^{\lfloor e/2\rfloor}-1]$. 
In particular, the number of its roots is $p^{\lfloor e/2\rfloor}$.
\end{thm}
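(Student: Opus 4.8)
The plan is to characterize the solutions of $x^{2}\equiv 0\pmod{p^{e}}$ purely through the $p$-adic valuation. First I would note that $x=0$ is trivially a root, and that for any $x\not\equiv 0\pmod{p^{e}}$ we may write $x=p^{a}m$ with $a=\ord_{p}(x)\geq 0$ and $p\nmid m$. Then $x^{2}=p^{2a}m^{2}$ with $p\nmid m^{2}$, so $p^{e}\mid x^{2}$ holds if and only if $2a\geq e$, i.e.\ if and only if $a\geq\lceil e/2\rceil$ (using that $a$ is an integer). Hence the set of roots of $x^{2}\equiv 0\pmod{p^{e}}$ is exactly the set of integers divisible by $p^{\lceil e/2\rceil}$; for the converse direction one just checks $2\lceil e/2\rceil\geq e$, so any multiple of $p^{\lceil e/2\rceil}$ indeed squares to $0$ mod $p^{e}$. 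This already proves the stated form $x=kp^{\lceil e/2\rceil}$.

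The second step is to count these roots modulo $p^{e}$. Writing a root as $x=kp^{\lceil e/2\rceil}$, two such integers $kp^{\lceil e/2\rceil}$ and $k'p^{\lceil e/2\rceil}$ are congruent mod $p^{e}$ precisely when $p^{e}\mid(k-k')p^{\lceil e/2\rceil}$, that is, when $k\equiv k'\pmod{p^{\,e-\lceil e/2\rceil}}$. Since $e-\lceil e/2\rceil=\lfloor e/2\rfloor$, the distinct residue classes of roots correspond bijectively to $k\in\mathbb{Z}[0,p^{\lfloor e/2\rfloor}-1]$, so there are exactly $p^{\lfloor e/2\rfloor}$ of them.

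There is essentially no serious obstacle; the only points requiring a moment's care are the elementary identities hidden in the ceiling/floor notation ($2\lceil e/2\rceil\geq e$ and $\lceil e/2\rceil+\lfloor e/2\rfloor=e$, which one may simply verify by splitting on the parity of $e$) and making sure the argument treats $x=0$ on the same footing as $x$ with nonzero prime-to-$p$ part. As an alternative phrasing one could work directly in $\mathbb{Z}/p^{e}\mathbb{Z}$, whose nilradical is the ideal generated by $p$, and observe that the elements of square zero are exactly those in the ideal $(p^{\lceil e/2\rceil})$, which has index $p^{\lceil e/2\rceil}$ and hence $p^{e}/p^{\lceil e/2\rceil}=p^{\lfloor e/2\rfloor}$ elements; but the valuation argument above is the most self-contained route.
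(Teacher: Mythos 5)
Your proof is correct and follows essentially the same route as the paper: writing $x=p^{a}m$ with $p\nmid m$ and observing that $p^{e}\mid x^{2}$ exactly when $2a\geq e$, i.e.\ $a\geq\lceil e/2\rceil$. The only difference is that you make explicit the counting of distinct residues modulo $p^{e}$ (via $\lceil e/2\rceil+\lfloor e/2\rfloor=e$), a step the paper leaves implicit, so there is nothing to correct.
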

\begin{proof} It is clear that $x=kp^{\lceil e/2\rceil}$ for $k\in\mathbb{Z}[0,p^{\lfloor e/2\rfloor}-1]$ satisfies 
$x^2\equiv 0\pmod{p^e}$. 
Thus, we only have to show that other forms of the roots do not exist. 
Let $x=kp^t$ with $\gcd(k,p)=1$. Since $\gcd(x^2,p^e)\ne 1$, $x$ is a multiple number of $p$, 
therefore, $t\geq 1$. 
Then $x^2 = k^2p^{2t} \equiv 0\pmod{p^e}$ if and only if $2t\geq e$ holds.
\end{proof}

\begin{thm}\label{nonzerothm}
All roots of 
$x^2\equiv \hat{C}\pmod{p^e}$ for $\hat{C}=ap^{\ell}(\ell<e)$, 
which is squared, and $\gcd(a,p)=1$ are given by 
$$
x = yp^{\ell/2}+kp^{e-\ell/2}, \quad k\in\mathbb{Z}[0,p^{\ell/2}-1], 
$$
where $y$ is a root of $y^2\equiv a\pmod{p^{e-\ell}}$. 
In particular, the number of its roots is $2p^{\ell/2}$.
\end{thm}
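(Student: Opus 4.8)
The plan is to first observe that the hypothesis forces $\ell$ to be even, then verify that the displayed family consists of roots, and finally show there are no others by a valuation argument that reduces everything to Theorem~\ref{justtwo} applied modulo $p^{e-\ell}$.

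First I would record that $\ell$ is even. Since $\hat C=ap^\ell$ is squared mod $p^e$, there is an $M$ with $M^2\equiv ap^\ell\pmod{p^e}$. If the $p$-adic valuation $v_p(M^2)$ were $\ge e$, then $ap^\ell\equiv 0\pmod{p^e}$, contradicting $\ell<e$ and $\gcd(a,p)=1$; hence $v_p(M^2)<e$, and comparing valuations in $M^2-ap^\ell\equiv 0\pmod{p^e}$ gives $v_p(M^2)=\ell$. As $v_p(M^2)=2v_p(M)$, $\ell$ is even, so $p^{\ell/2}$ makes sense; writing $M=p^{\ell/2}y_0$ with $\gcd(y_0,p)=1$ yields $y_0^2\equiv a\pmod{p^{e-\ell}}$. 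In particular $y^2\equiv a\pmod{p^{e-\ell}}$ is solvable, and since $e-\ell\ge 1$ and $\gcd(a,p)=1$, Theorem~\ref{justtwo} shows it has exactly the two roots $y\equiv\pm y_0\pmod{p^{e-\ell}}$.

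Next I would check that the family solves (\ref{QE}) and count it. Expanding, $x=yp^{\ell/2}+kp^{e-\ell/2}$ gives $x^2=y^2p^\ell+2ykp^{e}+k^2p^{2e-\ell}$; the last two terms vanish mod $p^e$ because $e\ge 1$ and $2e-\ell\ge e$, while $y^2p^\ell\equiv ap^\ell=\hat C\pmod{p^e}$ since $y^2\equiv a\pmod{p^{e-\ell}}$. Replacing $y$ by $y+p^{e-\ell}$ only shifts $k$, so the set of values depends only on $y\bmod p^{e-\ell}$; thus the two classes $y\equiv\pm y_0$ produce at most $2p^{\ell/2}$ residues mod $p^e$. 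Within one class the $p^{\ell/2}$ values $yp^{\ell/2}+kp^{e-\ell/2}$, $0\le k<p^{\ell/2}$, are pairwise distinct mod $p^e$ (they differ by the distinct multiples $0,p^{e-\ell/2},\dots,(p^{\ell/2}-1)p^{e-\ell/2}$ of $p^{e-\ell/2}$, all in $[0,p^e)$), and the two classes are disjoint because values of the $y_0$ class are $\equiv y_0p^{\ell/2}\pmod{p^{e-\ell/2}}$ and those of the $-y_0$ class are $\equiv -y_0p^{\ell/2}$, with $2y_0p^{\ell/2}\not\equiv 0\pmod{p^{e-\ell/2}}$ because $p$ is odd, $\gcd(y_0,p)=1$ and $e-\ell\ge 1$. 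Hence there are exactly $2p^{\ell/2}$ roots of the stated form.

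Finally, for completeness, suppose $x^2\equiv\hat C\pmod{p^e}$. The valuation dichotomy above rules out $v_p(x)\ge e/2$ (it would force $\hat C\equiv 0$), so $v_p(x)=\ell/2$; writing $x=p^{\ell/2}w$ with $\gcd(w,p)=1$ gives $w^2\equiv a\pmod{p^{e-\ell}}$, hence $w\equiv\pm y_0\pmod{p^{e-\ell}}$ by Theorem~\ref{justtwo}. Writing $w=\pm y_0+tp^{e-\ell}$ and multiplying by $p^{\ell/2}$ yields $x\equiv\pm y_0p^{\ell/2}+kp^{e-\ell/2}\pmod{p^e}$ with $k\equiv t\pmod{p^{\ell/2}}$, i.e.\ the displayed form. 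The only delicate point is the bookkeeping in the count — ensuring the two $\pm$ families neither overlap nor are over-counted — but this reduces to the odd-prime observation $2y_0\not\equiv 0\pmod{p^{e-\ell}}$; the rest is routine valuation arithmetic.
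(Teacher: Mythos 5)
Your proof is correct and follows essentially the same route as the paper: a valuation argument forces $v_p(x)=\ell/2$, the substitution $x=p^{\ell/2}w$ reduces the problem to $w^2\equiv a\pmod{p^{e-\ell}}$, Theorem~\ref{justtwo} gives the two classes $\pm y_0$, and lifting back yields the stated parametrization. Your write-up is in fact slightly more careful than the paper's (you justify that $\ell$ is even, verify solvability before invoking Theorem~\ref{justtwo}, and check that the $2p^{\ell/2}$ listed residues are pairwise distinct), but these are refinements of the same argument rather than a different approach.
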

\begin{proof}
Note that $\ell$ must be even under assumption (A).
Let $x=yp^t$ such that $\gcd(y,p)=1$, and substitute $x$ into the quadratic equation; thus, we obtain:
$$
x^2 = y^2p^{2t}\equiv ap^{\ell}\pmod{p^e}.
$$
Suppose that $2t<\ell$, $y^2\equiv ap^{\ell-2t}\pmod{p^{e-2t}}$ holds, 
which means that $y$ is a multiple of $p$ and is contradictory. 
Conversely, suppose that $2t>\ell$, $y^2p^{2t-\ell}\equiv a\pmod{p^{e-\ell}}$ holds, 
which means that $a$ is a multiple of $p$ and is contradictory. 
Therefore, $2t=\ell$ holds, and we obtain:
\begin{equation}\label{yeq}
y^2\equiv a\pmod{p^{e-\ell}}.
\end{equation}
From Theorem \ref{justtwo}, (\ref{yeq}) has only two roots in modulo $p^{e-\ell}$. 
Therefore, the root $x$ of $x^2\equiv ap^{\ell}\pmod{p^e}$ can be represented as: 
\begin{equation}\label{genrep}
x = yp^{\ell/2}+bp^{e-\ell}
\end{equation}
for some $b\in\mathbb{Z}$. 
(\ref{genrep}) satisfies $x^2\equiv ap^{\ell}\pmod{p^e}$. Then, we have 
$x^2-ap^{\ell}\equiv(yp^{\ell/2}+bp^{e-\ell})^2 -ap^{\ell}= 2ybp^{e-\ell/2}+b^2p^{2e-2\ell} 
=bp^{e-\ell/2}(2y+bp^{e+\ell/2})\pmod{p^e}$. 
We learn $2by\equiv 0\pmod{p^{\ell/2}}$. Since $p$ is odd and $y$ is invertible, 
$b$ is a multiple of $p^{\ell/2}$. Then, $b=kp^{\ell/2}$ for some $k\in\mathbb{Z}$.
Substituting $b=kp^{\ell/2}$ into (\ref{genrep}), we reach 
$x = yp^{\ell/2}+kp^{e-\ell/2}$ for $k\in\mathbb{Z}[0,p^{\ell/2}-1]$. 
Since only two $y$ satisfy (\ref{yeq}), the number of $x$ is $2p^{\ell/2}$.
\end{proof}

\begin{rem}
The degenerate case occurs frequently when $\hat{N}$ has a small prime factor.
Therefore, in many cases, we can easily find the desired roots by 
brute force without using Hensel's lifting lemma.
\end{rem}

\section{Number of Candidates of Message}\label{NUM}
We really need for the perturbed public key $\hat{N}$ is to be easily factorable. 
After factoring, the problem breaks down to find the roots of the modular 
quadratic equation. 
Complexity of finding square roots mod $\hat{N}$ depends on prime factor decomposition of $\hat{N}$. 
Complicated cases arise when $\hat{N}$ is not square-free. 
Here we denote the function the number of distinct prime factor of $\hat{N}$ by $\omega(\hat{N})$, 
and the number of roots of $x^2\equiv \hat{C}\pmod{\hat{N}}$ by $\eta(\hat{C},\hat{N})$. 
CRT equality (\ref{CRT}) says that $\eta(\hat{C},\hat{N})$ is a multiplicative function. 
Combining Theorem \ref{justtwo}, \ref{zerothm}, and \ref{nonzerothm}, each $\eta(\hat{C},p^e)$ can be represented explicitly 
as follows.

\begin{thm}\label{th:numberofroots}
$$
\eta(\hat{C},\hat{N}) = \eta(\hat{C},p_1^{e_1})\cdots\eta(\hat{C},p_{\omega(\hat{N})}^{e_{\omega(\hat{N})}})
$$
for $\hat{N}=p_1^{e_1}\cdots p_{\omega(\hat{N})}^{e_{\omega(\hat{N})}}$, and where 
$$
\nu(\hat{C},p^e) = \left\{\begin{array}{lll}
2 & \mbox{if} & \gcd(\hat{C},p^e)=1 \\
2p^{\ell/2} & \mbox{if} & \gcd(\hat{C},p^e)=p^{\ell}(0<\ell<e) \\
p^{\lfloor e/2\rfloor} & \mbox{if} & \gcd(\hat{C},p^e)={p^e}.
\end{array}
\right.
$$
\end{thm}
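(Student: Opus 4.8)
The plan is to prove the statement in two stages: first establish that $\eta(\hat C,\cdot)$ is multiplicative across the coprime prime-power factors of $\hat N$, and then evaluate it on a single prime power by a three-way case split that merely invokes the theorems already proved.

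For the multiplicativity I would use the CRT ring isomorphism $\mathbb{Z}/\hat N\mathbb{Z}\;\cong\;\prod_{j=1}^{\omega(\hat N)}\mathbb{Z}/p_j^{e_j}\mathbb{Z}$ given by $x\mapsto(x\bmod p_j^{e_j})_j$, stressing that this is a bijection on the \emph{full} residue rings (roots are sought in $\mathbb{Z}[0,\hat N-1]$, not only among units, which matters in the degenerate case). Since reduction is a ring homomorphism, $x^2\equiv\hat C\pmod{\hat N}$ holds if and only if $x_j^{\,2}\equiv\hat C\pmod{p_j^{e_j}}$ for every $j$; thus a root mod $\hat N$ is exactly a tuple of roots of the component equations, and formula (\ref{CRT}) is precisely the inverse map realizing this correspondence. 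Counting tuples yields $\eta(\hat C,\hat N)=\prod_j\eta(\hat C,p_j^{e_j})$.

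For the prime-power value, write $\gcd(\hat C\bmod p^e,\,p^e)=p^{\ell}$ with $0\le\ell\le e$ and split into the three listed cases. If $\ell=0$, Theorem~\ref{justtwo} gives exactly the two roots $\pm M\bmod p^e$, so the count is $2$. If $0<\ell<e$, then $\hat C\equiv ap^{\ell}\pmod{p^e}$ with $\gcd(a,p)=1$, and because $\hat C$ is squared $\bmod\ p^e$ (assumption (A)) the exponent $\ell$ is forced even, so Theorem~\ref{nonzerothm} applies and the count is $2p^{\ell/2}$. If $\ell=e$, i.e. $\hat C\equiv 0\pmod{p^e}$, Theorem~\ref{zerothm} gives $p^{\lfloor e/2\rfloor}$. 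Combining these three evaluations with the multiplicativity from the first stage finishes the proof.

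I do not foresee a real obstacle, since the three nontrivial root-counts are already in hand; the only points requiring care are in the CRT step, namely that the correspondence is a bijection on the whole rings (so that the genuinely non-unit roots of the degenerate case are not dropped) and that assumption (A) transfers to each factor equation, which is what guarantees $\ell$ is even in the middle case and hence that $p^{\ell/2}$ is an integer and Theorem~\ref{nonzerothm} is legitimately applicable.
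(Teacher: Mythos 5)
Your proposal is correct and follows essentially the same route as the paper: the paper likewise derives multiplicativity of $\eta(\hat{C},\cdot)$ from the CRT decomposition (\ref{CRT}) and then reads off the three prime-power counts from Theorems \ref{justtwo}, \ref{nonzerothm}, and \ref{zerothm}. Your write-up is in fact slightly more careful than the paper's sketch, since you make explicit that the CRT bijection is on the full residue rings (so non-unit roots in the degenerate case are counted) and that assumption (A) forces $\ell$ to be even.
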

It is well known that the following asymptotic estimate for 
$Q(x)$, which is the number of square-free numbers below $x$.

\begin{thm}(Theorem 333, Hardy-Write~\cite{Hardy-Wright}, p.355) \label{HWasym}
$$
Q(x) \sim\frac{6}{\pi^2}x+{\cal O}(\sqrt{x})\quad (x\to\infty).
$$
\end{thm}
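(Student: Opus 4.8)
Although Theorem~\ref{HWasym} is classical, I outline the standard argument. The plan is to start from the exact formula $Q(x)=\sum_{n\leq x}\mu^2(n)$, where $\mu$ denotes the M\"obius function, and to exploit the elementary identity $\mu^2(n)=\sum_{d^2\mid n}\mu(d)$ --- valid because the right-hand side is precisely the indicator function of the square-free integers (for $n=\prod p^{a_p}$ it equals $\sum_{d\mid\prod p^{\lfloor a_p/2\rfloor}}\mu(d)$, which is nonzero exactly when every $a_p\leq 1$).

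First I would interchange the order of summation to obtain
$$
Q(x)=\sum_{n\leq x}\sum_{d^2\mid n}\mu(d)=\sum_{d\leq\sqrt{x}}\mu(d)\left\lfloor\frac{x}{d^2}\right\rfloor.
$$
Then, replacing each $\lfloor x/d^2\rfloor$ by $x/d^2$ costs at most $1$ per term and there are at most $\sqrt{x}$ terms, so
$$
Q(x)=x\sum_{d\leq\sqrt{x}}\frac{\mu(d)}{d^2}+{\cal O}(\sqrt{x}).
$$
Next I would complete the series to infinity: since $\sum_{d>\sqrt{x}}|\mu(d)|/d^2\leq\sum_{d>\sqrt{x}}d^{-2}={\cal O}(x^{-1/2})$, multiplying the tail by $x$ again yields an error ${\cal O}(\sqrt{x})$, so that $Q(x)=x\sum_{d=1}^{\infty}\mu(d)/d^2+{\cal O}(\sqrt{x})$.

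Finally, I would identify the constant via the Dirichlet series $\sum_{d=1}^{\infty}\mu(d)/d^{s}=1/\zeta(s)$ evaluated at $s=2$, together with Euler's evaluation $\zeta(2)=\pi^2/6$, which gives $\sum_{d=1}^{\infty}\mu(d)/d^2=6/\pi^2$ and hence the stated asymptotic. The only genuinely nontrivial ingredients are the value $\zeta(2)=\pi^2/6$ and the absolute convergence of $\sum\mu(d)/d^2$ (which legitimizes completing the tail); everything else is elementary bookkeeping. The main obstacle, modest as it is, lies in the error term: one must check that both the accumulated floor-function errors and the tail of the series are each ${\cal O}(\sqrt{x})$, which is exactly the size of the claimed remainder, so the two estimates leave no room to be wasteful.
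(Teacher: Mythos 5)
Your argument is correct and is precisely the classical proof of Theorem 333 in the cited Hardy--Wright reference (the paper itself quotes the result without proof): the identity $\mu^2(n)=\sum_{d^2\mid n}\mu(d)$, the swap of summation, the floor-function and tail estimates each of size ${\cal O}(\sqrt{x})$, and the evaluation $\sum_{d\geq 1}\mu(d)/d^2=1/\zeta(2)=6/\pi^2$ are exactly the standard steps. No gaps; nothing further is needed.
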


This estimate tells us that the probability that a number should be square-free 
is approximated as $\frac{6}{\pi^2}\approx 0.6079\cdots$ for large $x$.
Note that $\ell$ is even because $\hat{C}$ is squared, and $\eta(\hat{C},p)=1$ if and only if $\gcd(\hat{C},p)=p$.
Theorem \ref{th:numberofroots} implies the following upper bound for square-free $\hat{N}$:
\begin{equation}\label{bound}
\eta(\hat{C},\hat{N})\leq 2^{\omega(\hat{N})}.
\end{equation}
The equality holds in (\ref{bound}) if and only if $\gcd(\hat{C},\hat{N})=1$. 
Therefore, Theorem \ref{th:numberofroots} and \ref{HWasym} mean that 
the probability that the inequality (\ref{bound}) holds 
is greater than or equal to $\frac{6}{\pi^2}$ asymptotically.
The asymptotic behavior of $\omega(\hat{N})$ is described by Theorem~\ref{EK}.

\begin{thm}(Erd\"os and Kac\cite{ErdosKac}) \label{EK}
The function $(\omega(n)-\ln\ln n)/\sqrt{\ln\ln n}$ is normally distributed in the 
sense that, for any fixed $z$, one has:
$$ 
\frac{1}{T}\sharp
\left\{
n\leq T;\frac{\omega(n)-\ln\ln n}{\sqrt{\ln\ln n}}\leq z\right\} \\
\to 
\int_{-\infty}^z\frac{e^{-\zeta^2/2}}{\sqrt{2\pi}}d\zeta 
$$
as $T\to\infty$, 
where we denote the cardinality of a set $A$ as $\sharp A$.
\end{thm}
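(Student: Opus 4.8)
The plan is to prove the Erd\"os--Kac theorem by the \emph{method of moments}: show that for every fixed integer $k\ge 0$ the $k$-th moment of the normalized quantity, averaged over $n\le T$, converges as $T\to\infty$ to the $k$-th moment $\mu_k$ of the standard Gaussian (so $\mu_k=0$ for odd $k$ and $\mu_k=(k-1)!!$ for even $k$), and then invoke the classical fact that $N(0,1)$ is uniquely determined by its moments, so that moment convergence forces the stated convergence of $\frac1T\sharp\{n\le T;\ (\omega(n)-\ln\ln n)/\sqrt{\ln\ln n}\le z\}$ to $\int_{-\infty}^{z}e^{-\zeta^2/2}/\sqrt{2\pi}\,d\zeta$.

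First I would reduce $\ln\ln n$ to $L:=\ln\ln T$: for $n\in[T^{\varepsilon},T]$ one has $\ln\ln n=L+O(1)$, and the excluded $n$ form only an $O(T^{\varepsilon-1})$ fraction of $[1,T]$, so it suffices to treat $(\omega(n)-L)/\sqrt L$. Next I would truncate the primes: set $z=T^{1/(2k)}$ and $\omega_z(n)=\sum_{p\le z,\,p\mid n}1$; since any $n\le T$ has at most $2k$ prime factors exceeding $z$, we get $0\le\omega(n)-\omega_z(n)\le 2k$, hence $(\omega-L)/\sqrt L$ and $(\omega_z-L)/\sqrt L$ differ uniformly by $O(k/\sqrt L)\to0$. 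By Mertens' theorem $\sum_{p\le z}1/p=L+O(1)$, while $\sum_{p\le z}1/p^m=O(1)$ for $m\ge2$.

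The core computation is the moments of $\omega_z$. Writing $\omega_z(n)=\sum_{p\le z}\mathbf 1_{p\mid n}$ and expanding, $\frac1T\sum_{n\le T}\omega_z(n)^j=\sum_{p_1,\dots,p_j\le z}\bigl(\tfrac1{[p_1,\dots,p_j]}+O(1/T)\bigr)$, and since there are at most $z^j\le T^{1/2}$ tuples (for $j\le k$) the aggregate error is $O(T^{-1/2})$. Grouping tuples by the set partition of $\{1,\dots,j\}$ recording which coordinates are equal, a partition with $r$ blocks contributes $\sum^{\ast}_{q_1,\dots,q_r\le z}1/(q_1\cdots q_r)$ over \emph{distinct} primes, which by the Mertens estimates equals $L^{r}+O_k(L^{r-1})$. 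Hence $\frac1T\sum_{n\le T}\omega_z(n)^j=\sum_{\pi}L^{|\pi|}+O_k(L^{j-1})$, the sum over set partitions $\pi$ of $\{1,\dots,j\}$. Substituting this into $\frac1T\sum_n\bigl(\tfrac{\omega_z-L}{\sqrt L}\bigr)^k=L^{-k/2}\sum_{\ell}\binom k\ell(-L)^{k-\ell}\cdot\frac1T\sum_n\omega_z^\ell$ and collecting powers of $L$, all contributions of $L$-degree $>k/2$ cancel, the centering annihilates every partition having a singleton block, and the surviving $L^{0}$-term counts partitions of $\{1,\dots,k\}$ into $k/2$ blocks all of size $2$ --- there are $(k-1)!!$ of these when $k$ is even and none when $k$ is odd; every remaining term carries a negative power of $L$. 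This yields $\lim_T\frac1T\sum_{n\le T}\bigl(\tfrac{\omega_z-L}{\sqrt L}\bigr)^k=\mu_k$, and with the two reductions the same limit holds for $(\omega(n)-\ln\ln n)/\sqrt{\ln\ln n}$, which completes the argument.

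I expect the bookkeeping in that last step to be the main obstacle: organizing the alternating sum so the cancellation of the high powers of $L$ is transparent and the residual constant is correctly identified as $(k-1)!!$. A slicker packaging avoids it --- model $(\mathbf 1_{p\mid n})_{p\le z}$ on the uniform space $[1,T]$ by genuinely independent Bernoulli$(1/p)$ variables (the Kubilius model), control the discrepancy via the Tur\'an--Kubilius inequality, and apply the Lindeberg central limit theorem, for which $\sum_{p\le z}(1/p)(1-1/p)\sim L\to\infty$ together with each summand being bounded by $1$ verify Lindeberg's condition; equivalently, all cumulants of $\sum_{p\le z}\mathbf 1_{p\mid n}$ are $L+O(1)$, so after standardizing the cumulants of order $\ge 3$ are $O(L^{1-m/2})\to0$ and match those of $N(0,1)$. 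I would nonetheless present the moment method as the primary route, since it is essentially self-contained once Mertens' theorem is granted.
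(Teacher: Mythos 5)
The paper does not prove this statement at all: Theorem~\ref{EK} is quoted as a classical result with a citation to Erd\"os--Kac, so the only question is whether your sketch stands on its own. As written, its central step does not. You compute the raw moments $\frac1T\sum_{n\le T}\omega_z(n)^{\ell}=\sum_{\pi}L^{|\pi|}+O_k(L^{\ell-1})$ (sum over set partitions $\pi$ of $\{1,\dots,\ell\}$) and then form the normalized central moment through the binomial expansion $L^{-k/2}\sum_{\ell}\binom{k}{\ell}(-L)^{k-\ell}\cdot\frac1T\sum_n\omega_z(n)^{\ell}$. But the error $O_k(L^{\ell-1})$ in the $\ell$-th raw moment is multiplied by $\binom{k}{\ell}L^{k-\ell}$, producing an uncontrolled contribution of size $O_k(L^{k-1})$, i.e.\ $O_k(L^{k/2-1})$ after dividing by $L^{k/2}$; this diverges for every $k\ge 3$ (and already for $k=4$ it is of the same order as the target answer $3L^{2}$ before normalization). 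The cancellation of all $L$-degrees above $k/2$ that you invoke holds only among the identified main terms $\sum_{\pi}L^{|\pi|}$ (it is the Poisson central-moment identity); the unidentified errors do not participate in that cancellation and swamp the limit. So the step you yourself flag as ``the main obstacle'' is not deferrable bookkeeping: with the stated precision the limit simply cannot be extracted. The standard repair is to center \emph{before} expanding: compute $\frac1T\sum_{n\le T}\bigl(\sum_{p\le z}(\mathbf{1}_{p\mid n}-1/p)\bigr)^{k}$ directly, where each tuple of distinct primes contributes its independent-model expectation up to $O(2^{k}/T)$, tuples containing a prime appearing exactly once are annihilated term by term, and the pair partitions yield $(k-1)!!\,L^{k/2}\bigl(1+O_k(1/L)\bigr)$; this is the Granville--Soundararajan organization of Halberstam's moment proof (working with factorial moments or cumulants achieves the same effect).

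Your fallback route (Kubilius model plus Lindeberg) is a genuine classical proof, but as stated it is also incomplete: the transfer from the independent Bernoulli$(1/p)$ model to the uniform model on $[1,T]$ is not furnished by the Tur\'an--Kubilius inequality, which only bounds a variance; one needs the fundamental lemma of sieve theory to control the joint distribution of the indicators $\mathbf{1}_{p\mid n}$, and that comparison requires $\log z=o(\log T)$ with $(\log T/\log z)/\sqrt{\ln\ln T}\to 0$, so the choice $z=T^{1/(2k)}$ (correct for the moment route) is too large there. Your preliminary reductions --- discarding $n\le T^{\varepsilon}$, truncating at $z$ so that $0\le\omega(n)-\omega_z(n)\le 2k$, and Mertens' estimate $\sum_{p\le z}1/p=L+O(1)$ --- are fine, and the overall strategy is a standard viable one, but the crux of the argument is exactly the part that is missing.
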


Theorem \ref{EK} tells us the distribution of $\omega(\hat{N})$ obeys 
normal distribution with mean $\ln\ln \hat{N}$ and variance $\ln\ln \hat{N}$.
We use the rough estimate $\omega(\hat{N})\approx \ln\ln(2^{1024})=6.564959$ for 1,024-bit $\hat{N}$. 
However, convergence is slow because 
the term $\ln\ln \hat{N}$ increases very slowly as $\hat{N}$ becomes large.

Since both WIPR and RAMON uses at east an 80 bit challenge, 
the probability that a candidate correct message $M$ has the same data format 
accidentally is less than or equal to $1/2^{80}$.
According to the above mathematical observations, 
it is quite rare that plural candidates of $M$ will appear.


%
\section{Time of attack process}\label{SIM}
\subsection{Evaluation method}
In the attack procedure discused in Section \ref{AP}, 
the most time-consuming process is the 
factorization of $\hat{N}$ at Step 2. 
When this process takes too much time, another fault injection 
is required (Step 1). The other steps are straightforward.

 Therefore, in this section, the number of repeated fault injections and the time for factorization are 
evaluated by computer simulation. For simplicity, we explain only 
the case `crash of a byte of $N$'. 
The simulation procedure is as follows where $T_3$ is the average calculation time in Steps (3) and (4).

\begin{description}
\item[Step (0):] ~~~Generate a 1,024 bit $N$~(128 bytes). Let $N=\sum_{i=0}^{127}N[i](2^8)^i$. 
\item[Step (1a):] ~~~Decide fault position $j$.  $j = 1, 2,\ldots,126$ in uniform distribution. 
\item[Step (1b):] ~~~Decide fault pattern $k_0$. $k_0 = 1,2,\ldots,255$ in uniform distribution.
As a notation, 
$$
\hat{N}_{k_0}[i]=\left\{\begin{array}{ll}
N[i]\oplus k_0, & \mbox{if }i=j \\
N[i]. & \mbox{otherwise}
\end{array}
\right.
$$
and $\hat{N}_{k_0} = \sum_{i=0}^{127}\hat{N}_{k_0}[i](2^8)^i$.
Attackers know $j$, but do not know $k_0$.
\item[Step (2):] ~~Let $k=1$. 
Reset timer $T$, which accumulates the processing time of Steps (2a) through (4).
Reset counter $c$, which counts successful factorization.
\item[Step (2a):] ~~~~~Make $\hat{N}_k$ and attempt to factorize it within 1 min. 
If factorization is successful, proceed to Steps 3 and 4 in Section III-D. Let $c=c+1$.
If factorization does not finish, proceed to Step (4).
\item[Step (3):] ~~~If $k=k_0$, go to Step (5).
\item[Step (4):] ~~~$k=k+1$.
When $k<256$, go back to Step (2a); otherwise, proceed to Step (6).
\item[Step (5):] ~~~The attack is successful. Time consumed is $T + cT_3$. 
The simulation ends.
\item[Step (6):] ~~~The attack is in fail. Consuming time is $T + cT_3$. Simulation ends.
\end{description}

\subsection{Results}
Using a desktop PC with a Core i7-2600 CPU at 3.4 GHz with 12 GB RAM, 
the simulation runs on Mathematica 9 for Windows 7 Pro 64 bit. 
A simulated one attack begins 
at Step (0). If this attack ends at Step (5), the attack can find the correct message, i.e., the 
attack is successful. Otherwise, it ends at Step (6), which means the attack has failed. 
As a result, 28 cases were successful among 195 simulated attacks, which is an 14.4\% success rate. 
According to this result, attack with $X$ fault injections has a success rate of $1-0.856^X$. 
The success rate is 54\% for $X=5$. 
 Figure~\ref{fig:A} shows the distribution of time consumed $T$ per single attack. 
 For successful attacks, 
the mean $T$ is 115.4 min, the median is 136.8 min, and the standard deviation is 78.1 min. For 
failed attacks, the mean, median, and standard deviation of $T$ 
are 226.4 min, 226.6 min, and 4 min, respectively. $T$ can be 255 
min at most. However, when factorization finishes within 1 min, $T$ will be less than 255 min. 
Moreover, the attack is complete when the correct message is found; 
thus, $T$ becomes much less than 255 min. if the factorized fault pattern $k$ is 
the true fault pattern, $k_0$.
 The distribution of the number of successful factorizations, $c$ is shown in Fig.\ref{fig:B}. 
Here, $c$ is 30.91 
on average for failed attack cases. For successful attacks, 
the factorization will break at 
Step (3); therefore, it does not reach the final fault pattern $k=255$ with high probability. 
Thus, the successful factorization rate is estimated as $30.91/255\approx 12.1$\%.
Note that, in this simulation, the time limit for the factorization process is 
1 min at Step (2a), which is just an example. The time limit for factorization would be optimized 
by considering total attack time and/or the cost of fault injections. 
This issue will be the focus of future study.

\begin{figure}[htbp]
 \begin{center}
  \includegraphics[width=80mm]{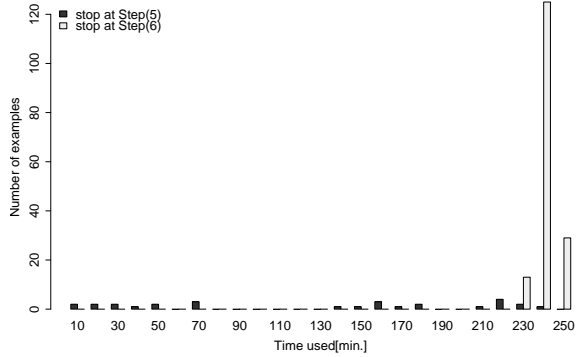}
 \end{center}
 \caption{Distribution of $T$}
 \label{fig:A}
\end{figure}

\begin{figure}[htbp]
 \begin{center}
  \includegraphics[width=80mm]{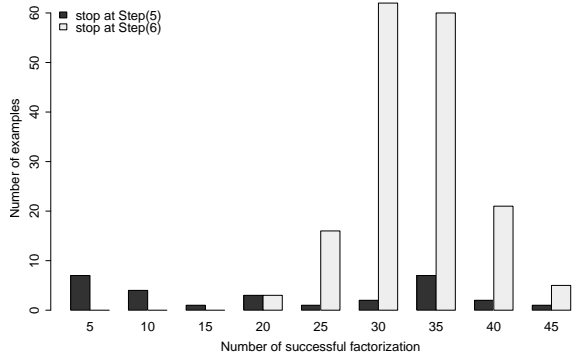}
 \end{center}
 \caption{Distribution of $c$}
 \label{fig:B}
\end{figure}

\section{Conclusion}
In this paper, we have proposed a powerful fault attack technique against a Rabin cryptosystem implemented in 
a passive RFID tag chip. 
Our attack uses one byte perturbation $\hat{N}$ of public key $N$. 
One difficulty with our attack is how to reconstruct the message $M$, including the UID, when $\gcd(M,\hat{N})\ne 1$.
We have provideed a complete algorithm to reconstruct $M$ for such cases. 
This attack requires only one fault in the public key if its perturbed public key can be factored.
The most time consuming process of our attack is the factorization of $\hat{N}$. 
Empirically, the successful factorization rate is estimated as 12.2\% of $\hat{N}$, 
even if factorization is limited within 1 min using a desktop PC. 
When this process takes too much time, another fault injection is preferable.

\section*{acknowledgements}
This work was supported by the Japan Society for the Promotion of Science KAKENHI Grant Number 25330157.


\begin{thebibliography}{}
%
%
\bibitem{BBS}L. Blum, M. Blum, and M. Shub, 
``A Simple Unpredictable Pseudo-Random Number Generator", SIAM Journal on Computing, vol. 15, pp. 364-383(1986).
\bibitem{Rabin}M. Rabin, Digitalized Signatures and Public-Key Functions as Intractable
as Factorization, tech. report, MIT, 1979. 
\bibitem{Furbass}F\"urbass and Wolkerstorfer, ``ECC Processor with Low Die Size for RFID Applications," 
IEEE International Symposium on Circuits and Systems, 2007, pp. 1835-1838(2007). 
\bibitem{Lee} Y. K. Lee, K. Sakiyama, L. Batina, and I. Verbauwhede, 
``Elliptic-Curve-Based Security Processor for RFID," IEEE Transactions on Computers, 57(11), pp. 1514-1527(2008).
\bibitem{Hutter1} M. Hutter, M. Feldhofer, and T. Plos, ``An ECDSA Processor for RFID Authentication," 
In S. B. O. Yalcin, editor, RFIDsec 2010, 6th Workshop, Istanbul, Turkey,
June 7-9, volume 6370 of LNCS, Springer, pp. 189-202(2010).
\bibitem{Hutter2} M. Hutter, M. Feldhofer, and J. Wolkerstorfer. ``A Cryptographic Processor for Low Resource
Devices: Canning ECDSA and AES like Sardines," In C. A. Ardagna and
J. Zhou, editors, Information Security Theory and Practices - WISTP, Heraklion,
Crete, Greece, June 1-3, volume 6633 of LNCS, Springer, pp. 144-159(2011).
\bibitem{Hutter3} M. Hutter, M. Joye, and Y. Sierra. Memory-Constrained Implementations of Elliptic
Curve Cryptography in Co-Z Coordinate Representation. In A. Nitaj and
D. Pointcheval, editors, AFRICACRYPT 2011, Dakar, Senegal, July 5-7, volume
6737 of LNCS, Springer, pp. 170-187(2011).
\bibitem{Hutter4}P. Pessl and M. Hutter, ``Curved Tags - A Low-Resource ECDSA Implementation tailored for RFID," 
Radio Frequency Identification: Security and Privacy Issues
Lecture Notes in Computer Science Volume 8651, pp. 156-172(2014).
\bibitem{Kern} T. Kern and M. Feldhofer, ``Low-Resource ECDSA Implementation for Passive RFID Tags," 
In ICECS, December 12-15, Athens, Greece, pp. 1236-1239(2010).
\bibitem{Oren-Feldhofer}Y. Oren and M. Feldhofer, ``WIPR--Public-Key Identification on
Two Grains of Sand," Proc. Workshop on RFID Security 2008, pp. 15-27(2008).
\bibitem{Oren-Feldhofer2} Y. Oren and M. Feldhofer, ``A Low-Resource Public-Key Identification
Scheme for RFID Tags and Sensor Nodes," Proc. 2nd International Conf.
Wireless Network Security (WiSec 09), ACM, pp. 59-68(2009).
\bibitem{Implement}
A. Arbit, Y. Livne, Y. Oren, A. Wool, ``Implementing public-key cryptography on passive RFID tags is practical," 
International Journal of Information Security, February 2015, Volume 14, Issue 1, pp 85-99(2015)
\bibitem{Naccache}
D. Naccache, Method, Sender Apparatus and Receiver Apparatus for
Modulo Operation, European Patent Application 91402958.2, filed
27 Oct. 1992.
\bibitem{Shamir}
A. Shamir, ``Memory Efficient Variants of Public-Key Schemes for
Smart Card Applications," Proc. Advances in Cryptology (EUROCRYPT'94), LNCS 950, pp. 445-449(1995).
\bibitem{Shamir2}
A. Shamir, ``SQUASH - a new MAC with provable security properties for highly
constrained devices such as RFID tags" In Fast Software Encryption,  Springer,  pp. 144-157(2008).
%
\bibitem{RAMON}W. Hinz, K. Finkenzeller, M. Seysen, 
``Secure UHF Tags with Strong Cryptography - 
Development of ISO/IEC 18000-63 Compatible Secure RFID Tags 
and Presentation of First Results," SENSORNETS, pp. 5-13(2013)
\bibitem{Mont} P. Montgomery, 
``Modular Multiplication Without Trial Division," Math. Computation, vol. 44, pp. 519--521(1985).
\bibitem{Bellcore} D. Boneh, R.A. DeMillo, and R.J. Lipton, ``On the importance
of eliminating errors in cryptographic computations," J. Cryptol. vol.14, no.2,
pp. 101-119, Springer-Verlag, Berlin, 2001. Earlier version published in EUROCRYPT' 97.
%
%

\bibitem{Bar} H. Bar-Ei, H. Choukri, D. Naccache, M. Tunstall, and C. Whelan, 
``The sorcerer's apprentice guide to fault attacks," In Proc. of the IEEE 94(2), pp. 370-382(2006).


\bibitem{Joye-Tunstall-Book} M. Joye, and M. Tunstall eds., 
{\it Fault Analysis in Cryptography}, Springer, 2012.

%
%
%
%

%
\bibitem{Seifert} J. Seifert, 
``On authenticated computing and RSA-based authentication." In Proc. of the 12th
ACM Conference on Computer and Communications Security (CCS 2005), pp. 122-127(2005).
%
\bibitem{Muir} J. A. Muir, ``Seifert's RSA fault attack: Simplified analysis and generalizations, "
In ICICS'06 Proc. of the 8th International Conference on Information and Communications Security, 
pp.420-434(2006).
%
\bibitem{Shoup} V. Shoup, 
{\it A Computational Introduction to Number Theory and Algebra - Second ed.}, 
Cambridge University Press, 2009.
%
\bibitem{Coppersmith}
D. Coppersmith, Small solutions to polynomial
equations, and low exponent RSA with vulnerabilities,
Journal of Cryptology 10, pp.233-260(1997).
\bibitem{FR1}
M. Franklin and M. Reiter, A Linear Protocol Failure for RSA with exponent three, in
CRYPTO'95 Rump Session, Aug. 1995.
\bibitem{FR2}
D. Coppersmith, M. Franklin, J. Patarin, and
M. Reiter, Low-exponent RSA with related
messages, in EUROCRYPT'98, vol. 1403 of
LNCS, pp. 58-71(1998).
%
\bibitem{Berzati2} A. Berzati, and L. Goubin, 
Perturbing RSA Public Keys: an Improved Attack. In 
E. Oswald, P. Rohatgi (eds.): Cryptographic Hardware and
Embedded Systems (CHES 2008), Lecture Notes in Computer Science vol. 5154, Springer, 
pp. 380-395(2008).
%
\bibitem{Berzati3} A. Berzati, C. Canovas-Dumas, and L. Goubin, 
Fault Attacks on RSA Public Keys: Left-To-Right Implementations are also Vulnerable. In 
M. Fischlin (ed.): CT-RSA 2009, Lecture Notes in Computer Science vol. 5473, Springer, 
pp. 414-428(2009).
%
%
%
%
\bibitem{RoundReduction} H. Choukri and M. Tunstall,
``Round reduction using faults," In Proc. of FDTC 2005, pp.13-24(2005).
%
\bibitem{park} J. Park, S. Moon, D. Choi, Y. Kang, and J. Ha,~
``Differential fault analysis for round-reduced AES by fault injection,"~
ETRI J. vol. 33, no.3, pp.434-441(2011).
%
\bibitem{yoshikawa1} H. Yoshikawa, M. Kaminaga, and A. Shikoda,~
``Round addition using faults for generalized Feistel network,"~
{\em IEICE Trans. Inf. \& Syst.}, vol.E96-D, no.1, pp.146-150(2013).
\bibitem{KYS2015IEEE} M. Kaminaga, H. Yoshikawa, and T. Suzuki, 
Double Counting in $2^t$-ary RSA Precomputation Reveals the Secret Exponent, 
IEEE Trans. on Inform. Forensics \& Sec., pp.1394-1401(2015).
%
%
\bibitem{Shoup_book}
V. Shoup, A Computational Introduction to Number Theory and Algebra(2nd ed.), Cambridge University Press, 2009.

\bibitem{Hardy-Wright} G. H. Hardy and E. M. Wright(revised by D. R. Heath-Brown and J. H. Silverman), 
{\it An Introduction to The Theory of Numbers(Sixth Edition)}, Oxford University Press, 2008.
%
%
%

%
\bibitem{ErdosKac} P. Erd\"os and M. Kac, 
The Gaussian law of errors in the theory of additive number theoretic functions, 
American Journal of Mathematics 62, pp. 738-742(1940) .
%
%
%
%
%
%
%
\end{thebibliography}


%

\end{document}